\newtheorem{theorem}{Theorem}%  meant for continuous 
\newtheorem{proposition}[theorem]{Proposition}% 
\newtheorem{corollary}[theorem]{Corollary}
\theoremstyle{thmstyletwo}%
\theoremstyle{thmstylethree}%
\title{Tight entropy bound based on p-quasinorms.}
\author{
  Juan Pablo Lopez \\
  Universidad del Valle \\
  Cali, Colombia\\
  juan.lopez.holguin@correounivalle.edu.co \\
}
\begin{document}
\maketitle

\begin{abstract}
\noindent In the present paper we prove a family of tight upper and lower bounds for the Shannon entropy and von Neumann entropy based on the p-norms. This allows us to have an entropy estimate, a criterion for the finiteness of it and a bound on the difference of entropy, additionally, we did some numerical tests that show the efficiency of our approximations.
\end{abstract}

% keywords can be removed
\noindent\textbf{Keywords: } Entropy,  Estimate, Bound.

\section{Introduction}

\indent The Shannon entropy is one of the most important information quantities and it have been widely studied since it's introduction by Claude Shannon \cite{Shannon}. This is a measure defined on probability distributions that can have infinite number of values, then if we have a probability distribution $\textbf{p}=(p)_i$ the Shannon entropy is defined as 
\begin{equation}\label{Shannon entropy}
    S(\textbf{p})=-\sum_i p_i\text{log}_2 (p_i), 
\end{equation}
where $0\text{log}_2 (0)=0$. The basis of the logarithm will be always 2 and we write simply as log.\\
\indent There is a special interest in finding bounds for (\ref{Shannon entropy}) with other information quantities or metrics in the space of probability distributions \cite{Extremal relations, Relations,Linder, Tapus,Simic,Yamin,Yamin 2,Bacetti}. In the Section \ref{Sec Log ine} we prove and optimize a logarithm inequality that will allow us to have an upper and lower bounds for the entropy in the Section \ref{Sec entro bound}. Our bounds are based on the p-quasinorms defined on $\mathbb{R}^n$ as
\begin{equation}\label{P norm}
    ||\textbf{x}||_p=\left( \sum_i^n x^p \right)^{\frac{1}{p}},
\end{equation}
with $0<p<\infty$ and if $\textbf{x}\in\ell^{p}(\mathbb{R})$ as 
\begin{equation}\label{P inf norm}
    ||\textbf{x}||_p=\left(\sum_i^\infty x^p \right)^{\frac{1}{p}}.
\end{equation}
\indent With $ 1\leq p$ we have that (\ref{P norm}) and (\ref{P inf norm}) define norms \cite{Lax analysis}. Our upper bounds depends on the p-quasinorms with $0<p<1$ and works in infinite dimensions, with this we came to the same sufficient condition as in \cite{Bacetti} for (\ref{Shannon entropy}) to be finite in the Theorem \ref{Theo entro bound}.\\
\indent The results presented in the Section \ref{Sec entro bound} are valid for the von Neumann entropy $H(\cdot)$ defined on the quantum density matrices. \\
\indent Finally, our bounds can be used to approximate the entropy and the difference of entropy between two probability distributions, this is tested in the Section \ref{Sec Numerical} with a numerous amount of probability distributions.

\section{Logarithm inequality}\label{Sec Log ine}

\indent In this section we present a family of inequalities that allow us to bound the entropy, additionally we find the tight forms of these inequalities.

\begin{proposition}
Let $x\in [0,1]$, then there exists some positive constants $C_1,C_2$ such that
\begin{equation}\label{log inequality}
    C_1 \left(1-x^{1-\sigma} \right) \leq -\text{log}\left(x\right)\leq C_2 \left(\frac{1}{x^{1-\sigma}}-1 \right),
\end{equation}
for every $\sigma\in(0,1)$. Moreover, the best choice of constants is 
\[C_1=C_2=\frac{1}{\text{ln}(2)(1-\sigma)}.\]
\end{proposition}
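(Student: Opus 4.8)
The plan is to reduce both inequalities in (\ref{log inequality}) to the single elementary fact $\ln t \le t - 1$, valid for all $t>0$ with equality only at $t=1$. Writing $\log = \ln/\ln(2)$ and abbreviating $\alpha = 1-\sigma \in (0,1)$, I would use the substitution $t = x^{1-\sigma}$, which maps $x\in(0,1]$ bijectively onto $t\in(0,1]$ and converts $-\log x$ into $-\ln(t)/(\alpha\ln 2)$.

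For the lower bound, multiplying the claimed inequality $\frac{1}{\ln(2)(1-\sigma)}\left(1-x^{1-\sigma}\right) \le -\log x$ by $\alpha\ln 2$ and substituting turns it into $\ln t \le t-1$, which holds. For the upper bound, I would apply the same inequality at the reciprocal argument $1/t$ to get $\ln(1/t) \le 1/t - 1$, that is $-\ln t \le 1/t - 1$; dividing by $\alpha\ln 2$ and rewriting $1/t$ as $x^{-(1-\sigma)}$ recovers exactly $-\log x \le \frac{1}{\ln(2)(1-\sigma)}\left(x^{-(1-\sigma)}-1\right)$. This proves (\ref{log inequality}) with $C_1=C_2=\frac{1}{\ln(2)(1-\sigma)}$, and since this value is positive, the existence claim as well.

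To verify optimality, I would observe that at $x=1$ all three expressions in (\ref{log inequality}) vanish, so the constants are dictated by the behaviour as $x\to 1^-$. The largest admissible $C_1$ equals the infimum over $(0,1)$ of the ratio $-\log(x)/(1-x^{1-\sigma})$, and the smallest admissible $C_2$ equals the supremum of $-\log(x)/(x^{-(1-\sigma)}-1)$. Each ratio is a $0/0$ form at $x=1$, and L'H\^opital gives the common limit $\frac{1}{\ln(2)(1-\sigma)}$, using that the numerator derivative is $-1/(x\ln 2)$ and both denominator derivatives equal $-(1-\sigma)$ at $x=1$. Because the two inequalities already force these ratios to stay respectively above and below $\frac{1}{\ln(2)(1-\sigma)}$ on all of $(0,1)$, this limiting value is exactly the infimum and the supremum, so the stated constants are the best choice. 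The only point requiring care is tracking the direction of each inequality through the substitution and confirming that the extremal ratio is attained in the limit at the endpoint rather than in the interior; the analytic content is otherwise light.
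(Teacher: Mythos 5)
Your proof is correct, and it takes a genuinely different route from the paper. The paper works with the quotients $l(x)=\frac{-\log(x)}{1-x^{1-\sigma}}$ and $s(x)=\frac{-\log(x)}{x^{\sigma-1}-1}$ and proves they are monotone (decreasing and increasing, respectively) on $(0,1]$; this requires differentiating each quotient, reducing $l'(x)\leq 0$ and $s'(x)\geq 0$ to auxiliary inequalities, and proving those by a second round of monotonicity arguments. The optimal constants then fall out as the limits of the quotients at $x=1$, computed by L'H\^opital. You instead dispense with all derivative computations for the inequality itself: the substitution $t=x^{1-\sigma}$ turns both sides of (\ref{log inequality}) into the single classical bound $\ln t\leq t-1$ (applied once at $t$ and once at $1/t$), which immediately yields the inequality with $C_1=C_2=\frac{1}{\ln(2)(1-\sigma)}$. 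Your optimality argument is also sound and is a nice soft replacement for monotonicity: the proven inequality pins the ratio $-\log(x)/(1-x^{1-\sigma})$ above the candidate constant and the ratio $-\log(x)/(x^{\sigma-1}-1)$ below it on all of $(0,1)$, while the L'H\^opital limit at $x\to 1^-$ shows the candidate value is approached, so it is exactly the infimum (largest admissible $C_1$) and supremum (smallest admissible $C_2$). What each approach buys: yours is shorter and more elementary, resting on one standard inequality plus a limit; the paper's is heavier but establishes strictly more, namely the full monotone behaviour of the two quotients on $(0,1]$, which describes how the quality of the bound degrades as $x$ moves away from $1$ rather than only identifying the extremal constant.
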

\begin{proof}
\indent Consider the quotients 
\begin{equation}\label{l(x)}
    l(x)=\frac{-\text{log}\left(x\right)}{1-x^{1-\sigma}},
\end{equation}
\begin{equation}\label{S(x)}
    s(x)=\frac{-\text{log}\left(x\right)}{x^{\sigma-1}-1},
\end{equation}
for $x\in (0,1]$. We will prove that $l(x)$ and $s(x)$ are decreasing and increasing, respectively, this allows us to set a minimum and maximum at x=1.\\
\indent Note that these are differentiable functions on $(0,1]$ and we only need to prove that $l'(x)\leq 0$ and $s'(x)\geq 0$, respectively. \\
\indent The inequality $l'(x)\leq 0$ is equivalent to 
\begin{equation}\label{l(x) alternative}
x^{1-\sigma}(1-\text{ln}(x)(1-\sigma))\leq 1. 
\end{equation}
\indent Consider the function $\Tilde{l}(x)=x^{1-\sigma}(1-\text{ln}(x)(1-\sigma))$, this function is increasing on $(0,1]$ because we easily see that $\Tilde{l}'(x)\geq 0 $. Now we conclude that the maximum of $\Tilde{l}(x)$ is at $x=1$, which is precisely 1, proving (\ref{l(x) alternative}).\\
\indent On the other hand, the inequality $s'(x)\geq 0$ is equivalent to 
\begin{equation}\label{s(x) alternative}
x^{\sigma-1}(1+\text{ln}(x)(1-\sigma))\leq 1. 
\end{equation}
\indent Let $\Tilde{s}(x)=x^{\sigma-1}(1+\text{ln}(x)(1-\sigma))$, this function is increasing on $(0,1]$ because $\Tilde{s}'(x)\geq 0 $, then the maximum of $\Tilde{s}(x)$ is at $x=1$, which is precisely 1, proving (\ref{s(x) alternative}).\\
\indent Finally, the fact that $l(x)$ and $s(x)$ are decreasing and increasing, respectively, allows us to find the best constants as 
\begin{equation}\label{l(x) min}
    C_1=\lim_{x\to 1}\frac{-\text{log}\left(x\right)}{1-x^{1-\sigma}}=\frac{1}{\text{ln}(2)(1-\sigma)},
\end{equation}
\begin{equation}\label{S(x) max}
    C_2=\lim_{x\to 1}\frac{-\text{log}\left(x\right)}{x^{\sigma-1}-1}=\frac{1}{\text{ln}(2)(1-\sigma)},
\end{equation}

\end{proof}

Now we have an infinite amount of upper and lower bounds for the entropy, but which is the best? This question will be answered in the following proposition.

\begin{proposition}\label{l and s are good}
    Let $\sigma,\Tilde{\sigma}\in (0,1)$, if $\sigma<\Tilde{\sigma}$ then
    \begin{equation}
        \frac{1}{1-\sigma}\left(\frac{1}{x^{1-\sigma}}-1 \right) \geq \frac{1}{1-\Tilde{\sigma}}\left(\frac{1}{x^{1-\Tilde{\sigma}}}-1 \right),
    \end{equation}
    \begin{equation}
        \frac{1}{1-\sigma}\left(1-x^{1-\sigma} \right) \leq \frac{1}{1-\Tilde{\sigma}}\left(1-x^{1-\Tilde{\sigma}} \right) , 
    \end{equation}
    for all $x\in [0,1]$. Additionally, we have that
    \begin{equation}
        \lim_{\sigma\to 1^-} \frac{1}{1-\sigma}\left(\frac{1}{x^{1-\sigma}}-1 \right) = -\text{ln}(x),
    \end{equation}
    \begin{equation}
        \lim_{\sigma\to 1^-} \frac{1}{1-\sigma}\left(1-x^{1-\sigma} \right) = -\text{ln}(x) , 
    \end{equation}    
\end{proposition}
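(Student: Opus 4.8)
The plan is to fix $x\in(0,1)$ and reduce each of the two inequalities to a one-variable monotonicity statement through the substitution $t=1-\sigma\in(0,1)$ together with $u=-\ln x>0$, noting that $x^{-(1-\sigma)}=e^{ut}$ and $x^{1-\sigma}=e^{-ut}$. Under this change of variable the two expressions become $g(t):=\dfrac{e^{ut}-1}{t}$ and $\psi(t):=\dfrac{1-e^{-ut}}{t}$, and since $\sigma\mapsto t$ is orientation-reversing, the claim that the first expression is nonincreasing in $\sigma$ (respectively that the second is nondecreasing in $\sigma$) is equivalent to showing that $g$ is nondecreasing in $t$ (respectively that $\psi$ is nonincreasing in $t$) on $(0,1)$.

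For the first inequality I would differentiate and find $g'(t)=\dfrac{e^{ut}(ut-1)+1}{t^{2}}$, so that $g'(t)\ge 0$ reduces to showing $h(s):=e^{s}(s-1)+1\ge 0$ for $s=ut>0$; this follows at once from $h(0)=0$ and $h'(s)=se^{s}\ge 0$. (Equivalently, one may expand $g(t)=\sum_{k\ge 1}\frac{u^{k}t^{k-1}}{k!}$, a sum of nondecreasing terms.) For the second inequality, $\psi'(t)=\dfrac{e^{-ut}(ut+1)-1}{t^{2}}$, so $\psi'(t)\le 0$ reduces to $e^{-s}(s+1)\le 1$, i.e. the elementary bound $1+s\le e^{s}$. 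Both required facts are essentially the same exponential inequalities already used in the proof of the previous proposition, so this step should be routine once the substitution is in place.

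Next I would treat the endpoints, which the displayed inequalities include but where the expressions degenerate. At $x=1$ both sides of each inequality vanish, giving equalities. At $x=0$ the first inequality is read as a limit in which both sides are $+\infty$, while the second reduces to $\frac{1}{1-\sigma}\le\frac{1}{1-\tilde{\sigma}}$, which holds since $\sigma<\tilde{\sigma}$ forces $1-\sigma>1-\tilde{\sigma}>0$. Finally, for the two limits I would let $\sigma\to 1^{-}$, i.e. $t\to 0^{+}$, and use either the Taylor expansions above or L'Hôpital to obtain $\lim_{t\to 0^{+}}\frac{e^{ut}-1}{t}=u$ and $\lim_{t\to 0^{+}}\frac{1-e^{-ut}}{t}=u$; since $u=-\ln x$, both limits equal $-\ln(x)$, as claimed.

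The only genuinely delicate points are bookkeeping rather than conceptual: one must track the orientation reversal $\sigma<\tilde{\sigma}\iff t>\tilde{t}$ so that ``$g$ nondecreasing in $t$'' yields the correct ordering in $\sigma$, and one must handle the boundary values $x\in\{0,1\}$ by continuity and limits rather than by the interior monotonicity argument. The analytic heart of the proof, namely the signs of $g'$ and $\psi'$, collapses to the standard inequality $1+s\le e^{s}$, so I expect no real obstacle there.
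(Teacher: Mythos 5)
Your proof is correct and takes essentially the same route as the paper: fix $x$, differentiate the bound with respect to the parameter, and reduce the sign of the derivative to the two sides of the logarithm inequality (\ref{log inequality}) from the preceding proposition --- your inequalities $e^{s}(s-1)+1\ge 0$ and $e^{-s}(s+1)\le 1$ are exactly those two sides after the substitution $s=-(1-\sigma)\ln x$ --- followed by L'H\^{o}pital for the limits. The only differences are cosmetic: your change of variables $t=1-\sigma$, $u=-\ln x$ streamlines the computation, and you handle the endpoint cases $x\in\{0,1\}$ explicitly, which the paper leaves implicit.
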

\begin{proof}
    Let $f(\sigma)=\frac{1}{1-\sigma}\left(\frac{1}{x^{1-\sigma}}-1 \right)$, $f(\sigma)$ is differentiable on [0,1], then we only need to verify that $f'(\sigma)\leq 0$. The inequality $f'(\sigma)\leq 0$ becomes 
    \[ \frac{1}{(1-\sigma)^2}\left(\frac{1}{x^{1-\sigma}}-1 \right) +\frac{\text{ln}(x)x^{\sigma-1}}{1-\sigma} \leq 0, \]
    which is equivalent to left side of (\ref{log inequality}).\\
    \indent Let $g(\sigma)=\frac{1}{1-\sigma}\left(1-x^{1-\sigma} \right)$,  the inequality $g'(\sigma)\geq 0$ becomes 
    \[\frac{1-x^{1-\sigma}}{(1-\sigma)^2}+\frac{\text{ln}(x)x^{1-\sigma}}{1-\sigma}\geq 0,\]
    which is equivalent to right side of (\ref{log inequality}). The limits are simply an application of the L'Hôpital's rule \cite{Lax calculus}.
\end{proof}

The Proposition \ref{l and s are good} guaranties that  if $\sigma\to 1^{-}$ we obtain a better inequalities. In the next section we use the inequality (\ref{log inequality}) to bound and estimate the Shannon entropy.

\section{Entropy bound}\label{Sec entro bound}
Using (\ref{log inequality}) we can formulate the inequality 
\begin{equation}\label{xlogx inequality}
    \frac{1}{ln(2)(1-\sigma)} \left(x-x^{2-\sigma} \right) \leq -x\text{log}\left(x\right)\leq \frac{1}{ln(2)(1-\sigma)} \left(x^{\sigma}-x \right),
\end{equation}
for all $x\in [0,1]$ and every $\sigma\in (0,1)$. We are ready to obtain our bound for the entropy in the following proposition.
\begin{theorem}\label{Theo entro bound}
   Let $\textbf{p}=(p_1,...,p_n)$ be a discrete probability distribution and $\sigma\in (0,1)$, then 
   \begin{equation}\label{finite inequality}
        \frac{1}{ln(2)(1-\sigma)}(1-||\textbf{p}||_{2-\sigma})\leq S(\textbf{p}) \leq  \frac{1}{ln(2)(1-\sigma)} (||\textbf{p}||_\sigma -1 ).
   \end{equation}
   \indent If $\textbf{p}$ is an infinite discrete distribution, we have that $\textbf{p}\in \ell^{\sigma}(\mathbb{R})$ implies 
    \begin{equation}\label{inf inequality}
        \frac{1}{ln(2)(1-\sigma)}(1-||\textbf{p}||_{2-\sigma})\leq S(\textbf{p}) \leq  \frac{1}{ln(2)(1-\sigma)} (||\textbf{p}||_\sigma -1 ),
   \end{equation} 
    this inequality provides the same criterion for the finiteness of the entropy presented in \cite{Bacetti}. Moreover, the inequalities (\ref{finite inequality}) and (\ref{inf inequality}) are tight for every $\sigma\in (0,1)$.
\end{theorem}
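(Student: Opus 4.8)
The plan is to lift the scalar inequality (\ref{xlogx inequality}) to the whole distribution by summation. First I would set $x=p_i$ in (\ref{xlogx inequality}) for each index $i$ — this is legitimate since every $p_i\in[0,1]$ — and add the resulting chain over $i$. Factoring out the constant $\frac{1}{\ln(2)(1-\sigma)}$, the middle term becomes $-\sum_i p_i\log p_i=S(\textbf{p})$, while the outer terms become $\sum_i p_i-\sum_i p_i^{2-\sigma}$ and $\sum_i p_i^{\sigma}-\sum_i p_i$. The normalization $\sum_i p_i=1$ collapses the linear pieces to $\pm 1$, leaving the two power sums $\sum_i p_i^{2-\sigma}$ and $\sum_i p_i^{\sigma}$.

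The step that actually produces the quasinorms, and which I expect to be the main point to get right, is converting these power sums into $\|\textbf{p}\|_{2-\sigma}$ and $\|\textbf{p}\|_{\sigma}$. By definition $\sum_i p_i^{q}=\|\textbf{p}\|_q^{\,q}$, and because $\textbf{p}$ is a probability vector the quasinorm is monotone in $q$: $\|\textbf{p}\|_{2-\sigma}\le\|\textbf{p}\|_1=1$ while $\|\textbf{p}\|_{\sigma}\ge\|\textbf{p}\|_1=1$. Hence $\|\textbf{p}\|_{2-\sigma}^{\,2-\sigma}\le\|\textbf{p}\|_{2-\sigma}$ (base $\le 1$, exponent $>1$) and $\|\textbf{p}\|_{\sigma}^{\,\sigma}\le\|\textbf{p}\|_{\sigma}$ (base $\ge 1$, exponent $<1$). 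Both relaxations widen the gap in the correct direction, so the power-sum bounds imply the looser quasinorm bounds of (\ref{finite inequality}). One must track the inequality directions carefully here, since a reversed comparison would silently break the bound.

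For the infinite case the same pointwise inequality holds term by term, and I would justify passing to infinite sums by nonnegativity: each of $-p_i\log p_i$, $p_i-p_i^{2-\sigma}$ and $p_i^{\sigma}-p_i$ is nonnegative on $[0,1]$, so all partial sums are monotone and the inequalities survive the limit (monotone convergence). The hypothesis $\textbf{p}\in\ell^{\sigma}(\mathbb{R})$ makes $\|\textbf{p}\|_{\sigma}$ finite, so the right-hand side of (\ref{inf inequality}) is finite and therefore $S(\textbf{p})<\infty$, which is exactly the criterion of \cite{Bacetti}; note that $\sum_i p_i^{2-\sigma}\le\sum_i p_i=1$ always, so the left-hand series never causes trouble.

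Finally, for tightness I would exhibit, for each fixed $\sigma$, the degenerate distribution $\textbf{p}=(1,0,0,\dots)$. There $S(\textbf{p})=0$ and $\|\textbf{p}\|_{2-\sigma}=\|\textbf{p}\|_{\sigma}=1$, so both sides of (\ref{finite inequality}) equal $0$ and equality holds simultaneously. Since the relaxation step is itself an equality at this point (the power sum equals the quasinorm when a single coordinate carries all the mass), nothing is lost, and the constant $\frac{1}{\ln(2)(1-\sigma)}$ together with the norm expressions cannot be improved for any $\sigma\in(0,1)$.
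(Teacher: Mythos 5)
Your proof is correct and follows essentially the same route as the paper: apply the pointwise inequality (\ref{xlogx inequality}) at $x=p_i$, sum over $i$, use $\sum_i p_i=1$, and establish tightness with the degenerate distribution $\textbf{p}=(1,0,0,\dots)$. If anything, you are more careful than the paper's one-line proof: summation only yields the power sums $\sum_i p_i^q=\|\textbf{p}\|_q^q$, so the monotonicity step you make explicit ($\|\textbf{p}\|_{2-\sigma}^{2-\sigma}\le\|\textbf{p}\|_{2-\sigma}$ since $\|\textbf{p}\|_{2-\sigma}\le 1$, and $\|\textbf{p}\|_{\sigma}^{\sigma}\le\|\textbf{p}\|_{\sigma}$ since $\|\textbf{p}\|_{\sigma}\ge 1$) is genuinely needed to reach the quasinorms appearing in (\ref{finite inequality}), as is the monotone-convergence justification in the infinite case --- both of which the paper leaves implicit.
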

\begin{proof}
Using (\ref{xlogx inequality}) in the sum of the terms $p_i\text{log}(p_i)$ and the fact that $\sum_i p_i=1$ to deduce our inequalities. In the case of the infinite discrete distribution we need to ensure that $\sum_i p_i^{\sigma}<\infty$, that is why we add the condition $\textbf{p}\in \ell^{\sigma}(\mathbb{R})$.\\
\indent If we set $\textbf{p}=(1,0,0,..)$ we have that $||\textbf{p}||_{2-\sigma}=||\textbf{p}||_{\sigma})=1$ and $S(\textbf{p})=0$, then our inequalities become an equalities for every .
\end{proof}
\indent We need to guarantee that our estimates are good enough, here below we will prove that our bounds get closer as $\sigma \to 1^-$.
\begin{corollary}
    Let $\textbf{p}$ be a discrete probability distribution such that $\textbf{p}\in \ell^{\sigma}(\mathbb{R})$ for some $\sigma\in (0,1)$, then 
\begin{equation}\label{ine error above}
 \left| S(\textbf{p}) -  \frac{1}{ln(2)(1-\sigma)} (||\textbf{p}||_\sigma -1 ) \right|\leq  \frac{1}{ln(2)(1-\sigma)} (||\textbf{p}||_\sigma +||\textbf{p}||_{2-\sigma}  -2 ),  
\end{equation}
\begin{equation}\label{ine error down}
 \left| S(\textbf{p}) -  \frac{1}{ln(2)(1-\sigma)} (1-||\textbf{p}||_{2-\sigma}) \right|\leq  \frac{1}{ln(2)(1-\sigma)} (||\textbf{p}||_\sigma +||\textbf{p}||_{2-\sigma}  -2 ) .  
\end{equation}
\end{corollary}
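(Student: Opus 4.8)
The plan is to read both estimates directly off the two-sided bound already established in Theorem \ref{Theo entro bound}, so that essentially no new analytic work is needed beyond one algebraic observation. Write $L = \frac{1}{\ln(2)(1-\sigma)}(1-||\textbf{p}||_{2-\sigma})$ for the lower bound and $U = \frac{1}{\ln(2)(1-\sigma)}(||\textbf{p}||_\sigma - 1)$ for the upper bound. Since we assume $\textbf{p}\in\ell^\sigma(\mathbb{R})$, the theorem applies and yields the sandwich $L \le S(\textbf{p}) \le U$, which is the only input I will use.

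The key observation is that the common right-hand side of (\ref{ine error above}) and (\ref{ine error down}) is exactly the width of this sandwich. Subtracting the bounds gives
\begin{equation}
U - L = \frac{1}{\ln(2)(1-\sigma)}\bigl((||\textbf{p}||_\sigma - 1) - (1 - ||\textbf{p}||_{2-\sigma})\bigr) = \frac{1}{\ln(2)(1-\sigma)}(||\textbf{p}||_\sigma + ||\textbf{p}||_{2-\sigma} - 2).
\end{equation}
In particular $U - L \ge 0$ automatically, because $L \le U$, so the right-hand side is a genuine nonnegative error bound and the absolute values in the statement are well posed.

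With this identity in hand, both inequalities follow from the sandwich by opening the absolute value on the correct side. For (\ref{ine error above}), since $S(\textbf{p}) \le U$ we have $|S(\textbf{p}) - U| = U - S(\textbf{p})$, and since $L \le S(\textbf{p})$ we get $U - S(\textbf{p}) \le U - L$, which is precisely the claimed bound. Symmetrically, for (\ref{ine error down}), since $S(\textbf{p}) \ge L$ we have $|S(\textbf{p}) - L| = S(\textbf{p}) - L$, and since $S(\textbf{p}) \le U$ we conclude $S(\textbf{p}) - L \le U - L$, giving the second inequality.

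There is no real obstacle here: the entire content is the algebraic identity $U - L = \frac{1}{\ln(2)(1-\sigma)}(||\textbf{p}||_\sigma + ||\textbf{p}||_{2-\sigma} - 2)$ together with the fact that the true value lies between the two bounds. The only point worth emphasizing is that both estimates are controlled by the \emph{same} quantity, namely the bracket width $U-L$; this is what makes the corollary useful, since combining it with Proposition \ref{l and s are good} (the monotonicity in $\sigma$ and the $\sigma \to 1^-$ limits) shows that the approximation error can be driven to zero by taking $\sigma \to 1^-$.
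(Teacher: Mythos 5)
Your proof is correct and is exactly the argument the paper intends: the corollary is stated without a written proof precisely because it follows immediately from the sandwich $L \le S(\textbf{p}) \le U$ of Theorem \ref{Theo entro bound}, with both error terms bounded by the bracket width $U-L$. Your explicit verification that $U-L$ equals the common right-hand side, and that each absolute value opens on the known side of the sandwich, fills in the (trivial) details the paper leaves implicit.
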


\begin{corollary}
    Let $\textbf{p},\textbf{q}$ be a discrete probability distributions such that $\textbf{p},\textbf{q}\in \ell^{\sigma}(\mathbb{R})$ for some $\sigma\in (0,1)$ and $S(\textbf{p})\geq  S(\textbf{q})$, then 
\begin{equation}\label{diff ine above}
  S(\textbf{p}) - S(\textbf{q}) \leq \frac{1}{ln(2)(1-\sigma)} (||\textbf{p}||_\sigma +||\textbf{q}||_{2-\sigma}  -2 ).  
\end{equation}
\indent Moreover, there exists a $\sigma\in (0,1)$ such that
\begin{equation}\label{diff ine below}
  S(\textbf{p}) - S(\textbf{q}) \geq \frac{1}{ln(2)(1-\sigma)} (2-||\textbf{p}||_{2-\sigma} -||\textbf{q}||_{\sigma}).  
\end{equation}
\end{corollary}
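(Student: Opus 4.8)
The plan is to obtain both estimates by sandwiching each entropy between the two one-sided bounds of Theorem \ref{Theo entro bound} and then subtracting. The key structural observation is that every bound in that theorem carries the same prefactor $\frac{1}{\ln(2)(1-\sigma)}$, so the subtraction collapses cleanly: the common constant factors out, the two isolated $\pm 1$ terms combine into $\mp 2$, and only the norm terms survive. The hypothesis $\textbf{p},\textbf{q}\in\ell^{\sigma}(\mathbb{R})$ is exactly what is needed to make $\|\textbf{p}\|_\sigma$ and $\|\textbf{q}\|_\sigma$ finite, so that the upper bound of (\ref{inf inequality}) is legitimately available for both distributions.

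For the upper bound (\ref{diff ine above}), I would apply the upper bound of Theorem \ref{Theo entro bound} to $S(\textbf{p})$ and the lower bound to $S(\textbf{q})$, giving
\[
S(\textbf{p})-S(\textbf{q}) \le \frac{1}{\ln(2)(1-\sigma)}\bigl(\|\textbf{p}\|_\sigma-1\bigr) - \frac{1}{\ln(2)(1-\sigma)}\bigl(1-\|\textbf{q}\|_{2-\sigma}\bigr).
\]
Factoring out the common constant and merging the two $-1$ terms into $-2$ produces exactly $\frac{1}{\ln(2)(1-\sigma)}(\|\textbf{p}\|_\sigma+\|\textbf{q}\|_{2-\sigma}-2)$. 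Consistency is automatic here: since the upper bound for $S(\textbf{p})$ exceeds $S(\textbf{p})\ge S(\textbf{q})$, which in turn exceeds the lower bound for $S(\textbf{q})$, the right-hand side is nonnegative, as it must be.

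For the lower bound (\ref{diff ine below}) the symmetric move works: apply the lower bound of Theorem \ref{Theo entro bound} to $S(\textbf{p})$ and the upper bound to $S(\textbf{q})$, so that
\[
S(\textbf{p})-S(\textbf{q}) \ge \frac{1}{\ln(2)(1-\sigma)}\bigl(1-\|\textbf{p}\|_{2-\sigma}\bigr) - \frac{1}{\ln(2)(1-\sigma)}\bigl(\|\textbf{q}\|_\sigma-1\bigr),
\]
which rearranges to $\frac{1}{\ln(2)(1-\sigma)}(2-\|\textbf{p}\|_{2-\sigma}-\|\textbf{q}\|_\sigma)$. This derivation is valid for every $\sigma\in(0,1)$, so the existence asserted in the statement is immediate. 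The reason for phrasing it as ``there exists'' is that the right-hand side is only a useful (strictly positive) lower bound for suitable $\sigma$: because $\|\textbf{p}\|_{2-\sigma}\le 1\le\|\textbf{q}\|_\sigma$, the quantity $2-\|\textbf{p}\|_{2-\sigma}-\|\textbf{q}\|_\sigma$ can be negative, in which case the estimate is vacuous.

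To pin down a $\sigma$ making the bound informative, I would invoke the limits of Proposition \ref{l and s are good}: as $\sigma\to 1^-$ both one-sided entropy bounds converge to the true entropy, so the right-hand side of (\ref{diff ine below}) tends to $S(\textbf{p})-S(\textbf{q})$. When $S(\textbf{p})>S(\textbf{q})$ this limit is strictly positive, and by continuity there is a $\sigma$ near $1$ with $2-\|\textbf{p}\|_{2-\sigma}-\|\textbf{q}\|_\sigma>0$, yielding a genuine lower bound. The main obstacle is therefore not the algebra, which is a one-line subtraction, but correctly interpreting and justifying the existential clause: one must argue via the $\sigma\to 1^-$ limit that the chosen $\sigma$ gives a nonvacuous estimate, and treat the boundary case $S(\textbf{p})=S(\textbf{q})$ separately, where the inequality still holds with a nonpositive right-hand side.
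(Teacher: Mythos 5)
Your proposal is correct and takes essentially the route the paper intends: the paper states this corollary without a separate proof, and both inequalities follow exactly as you describe, by subtracting the two one-sided bounds of Theorem \ref{Theo entro bound} applied to $\textbf{p}$ and $\textbf{q}$, with the common prefactor $\frac{1}{\ln(2)(1-\sigma)}$ factoring out. Your observation that (\ref{diff ine below}) actually holds for \emph{every} $\sigma\in(0,1)$ (so the existential clause is automatic, and is really about choosing $\sigma$ near $1$ so that the right-hand side is positive and the bound nonvacuous) is consistent with, and stated more precisely than, the paper's own remark following the corollary.
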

\indent The inequality (\ref{diff ine below}) can not be formulated for all $\sigma\in (0,1)$ because is possible that the upper bound of $S(\textbf{q})$ is greater than $S(\textbf{p})$. But as our bounds are close enough to the entropy with $\sigma\to 1^-$ we can ensure that for some $\sigma$ the upper bound of $S(\textbf{q})$ is greater than $S(\textbf{p})$ is smaller than $S(\textbf{p})$.\\
\indent In the following section we test the capacity of our bounds to estimate the entropy and the entropy difference of probability distributions.

\section{Numerical experiments}\label{Sec Numerical}
We are going to test the capacity of our bounds to estimate the entropy and the difference of entropies, to do this, we generate randomly 500 probability distributions with 100 values each one.\\
\indent We set $\sigma=0.9$ for all tests, we need to have in mind the limitations of the machines in the calculus of $x^\sigma$ and $x\text{log}(x)$. All the programming codes used in this paper are publicly available in \cite{J lopez}.\\
\indent The following figure shows the values of the estimates obtained with (\ref{finite inequality}).
\begin{figure}[H]
\centering
\includegraphics[width=0.52\textwidth]{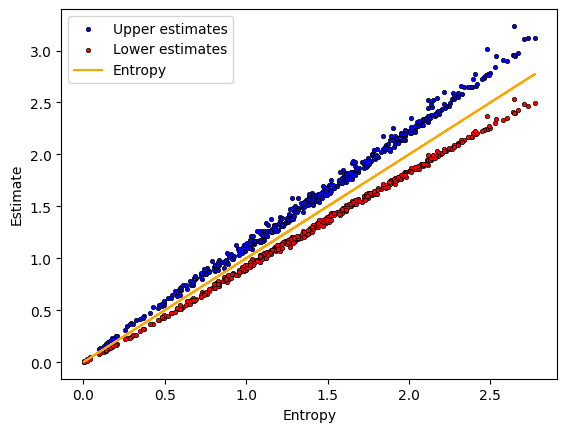}
\caption{Upper and Lower entropy bounds.}
\label{Fig Entropy Upp Low}
\end{figure}
\indent The orange line is the identity line and represents the value of the entropy, we can see that the upper bound is above and the lower bound is below this line.\\
\indent To analyze with more precision the effectiveness, we calculate the absolute  and relative error. In the Figure \ref{Fig ent bound errors} we see that with larger entropy we have a larger absolute error, on the other hand, we have that with entropies close to zero we could have a larger relative error but in the larger entropies this is more stable. Despite this, in general the bounds behave good as it is show in the Figure \ref{Fig Entropy Upp Low}.

\begin{figure}[H]\label{Fig ent bound errors}
    \centering
    \subfloat[\centering Entropy bounds absolute error. ]{{\includegraphics[width=4.9cm]{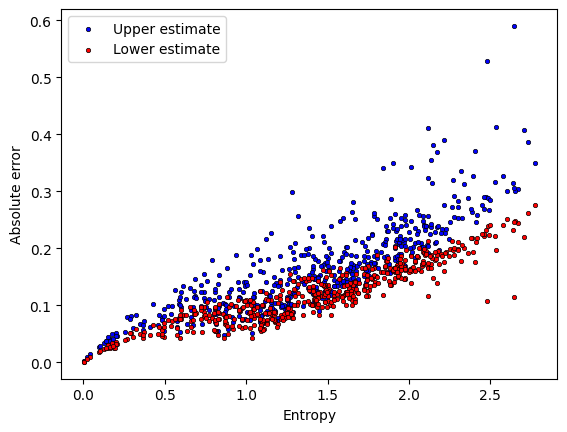} }}%
    \qquad
    \subfloat[\centering Entropy bounds absolute error.]{{\includegraphics[width=4.9cm]{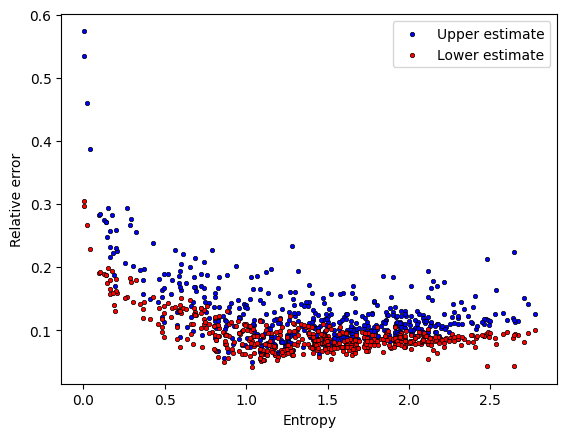} }}%
    \caption{Entropy bounds absolute and relative errors}%
    \label{Fig ent bound errors}%
\end{figure}
\indent To test the bound (\ref{ine error above}) we generate randomly 500 pairs of probability distributions, each pair we calculate the absolute difference and compare it with  with our upper bound. Again we use the orange line to represent the real value of the entropy difference and all.
\begin{figure}[H]
\centering
\includegraphics[width=0.55\textwidth]{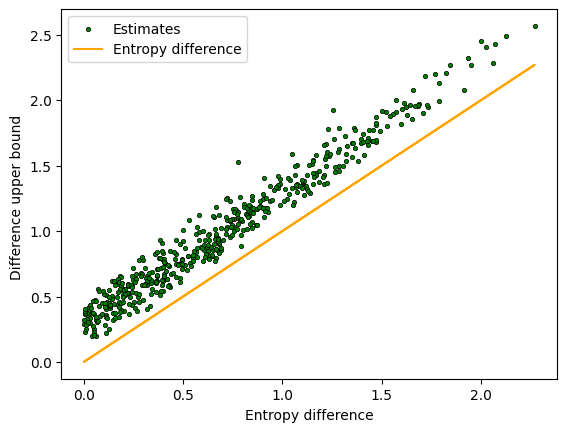}
\caption{Entropy difference bound .}
\end{figure}
\indent We can observe that our bound is very close to the real values of the differences, but we need to take a close look when the difference is zero.\\
\indent The Figure \ref{Fig diff entro error} shows the absolute and relative error of our bound, the absolute error is small but seems not to depend on the entropy difference. On the other side, the relative error shows that our bound behaves good except when the entropy difference is near to zero, but this is compensated with a low absolute error.
\begin{figure}[H]
    \centering
    \subfloat[\centering Entropy difference absolute error. ]{{\includegraphics[width=5.3cm]{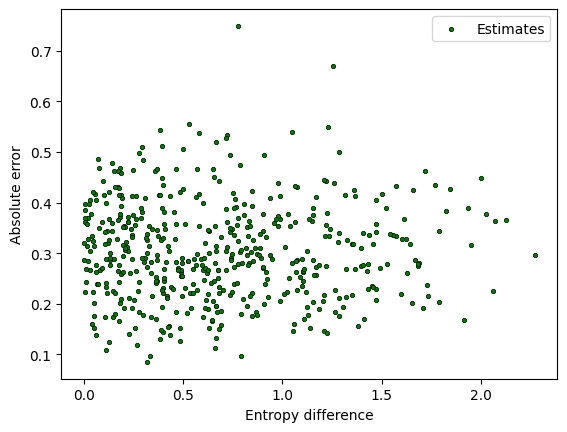} }}%
    \qquad
    \subfloat[\centering Entropy difference relative error]{{\includegraphics[width=5.3cm]{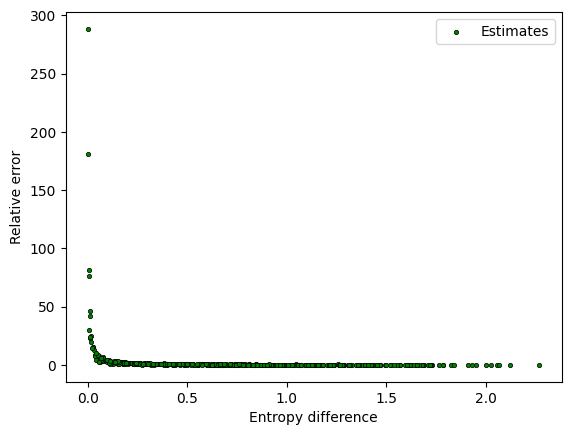} }}%
    \caption{Entropy difference bound absolute and relative errors.}%
    \label{Fig diff entro error}%
\end{figure}

\section{Conclusions}

We obtained a new but simple bounds for entropy, that can be used in infinite dimensions and for the von Neumann entropy. These bounds give an explicit relation between the entropy and the p-norms, in addition these bounds could be close enough to be used as approximations. Our finiteness criterion for the entropy coincides with one given in \cite{Bacetti}. Additionally, we conjecture that our bounds on the entropy difference could be improved to some form of uniform continuity bounds.

\bibliographystyle{unsrt}

\begin{thebibliography}{00}
\bibitem{Shannon} C. E. Shannon, “A Mathematical Theory of Communication,” Bell System Technical Journal, vol. 27, no. 3, pp. 379–423, 1948.

\bibitem{Entropy}M. Ohya and D. Petz, Quantum Entropy and Its Use. Springer Science \& Business Media, 2004.

\bibitem{Extremal relations} Y. Sakai and K. Iwata, “Extremal relations between shannon entropy and $\ell \alpha$-norm,” in International Symposium on Information Theory and Its Applications, IEEE, 2016.

\bibitem{Relations} Y. Sakai and K. Iwata, “Relations between conditional Shannon entropy and expectation of $\ell \alpha$-norm,” in IEEE International Symposium on Information Theory (ISIT), IEEE, 2016.

\bibitem{Linder} T. Linder and R. Zamir, “On the asymptotic tightness of the Shannon lower bound,” IEEE Transactions on Information Theory, vol. 40, no. 6, pp. 2026–2031, Jan. 1994, doi: https://doi.org/10.1109/18.340474.

\bibitem{Tapus} N. Ţăpuş and P. G. Popescu, “A new entropy upper bound,” Applied Mathematics Letters, vol. 25, no. 11, pp. 1887–1890, Nov. 2012, doi: https://doi.org/10.1016/j.aml.2012.02.056.

\bibitem{Simic} S. Simic, “Jensen’s inequality and new entropy bounds,” Applied Mathematics Letters, vol. 22, no. 8, pp. 1262–1265, Aug. 2009, doi: https://doi.org/10.1016/j.aml.2009.01.040.

\bibitem{Yamin} Yamin Sayyari, “An improvement of the upper bound on the entropy of information sources,” Journal of mathematical extension., vol. 15, Jun. 2021, doi: https://doi.org/10.30495/jme.v15i0.1976.

\bibitem{Yamin 2} Yamin Sayyari, “New refinements of Shannon’s entropy upper bounds,” Journal of Information and Optimization Sciences, vol. 42, no. 8, pp. 1869–1883, Nov. 2021, doi: https://doi.org/10.1080/02522667.2021.1966947.

\bibitem{Bacetti} V. Baccetti and M. Visser, “Infinite Shannon entropy,” Journal of Statistical Mechanics: Theory and Experiment, vol. 2013, no. 04, p. P04010, Apr. 2013, doi: https://doi.org/10.1088/1742-5468/2013/04/p04010.

\bibitem{Lax calculus} P. D. Lax and Maria Shea Terrell, Calculus With Applications. Springer Science+Business Media, 2014. doi: https://doi.org/10.1007/978-1-4614-7946-8.

\bibitem{Lax analysis} P. D. Lax, Functional Analysis. John Wiley \& Sons, 2014.

\bibitem{J lopez} J. Pablo, “Entropy upper and lower bound,”  2024. \url{https://sites.google.com/correounivalle.edu.co/juanpablolopez/computation} (accessed Aug. 20, 2024).

\end{thebibliography}

\end{document}